\newcommand{\given}{\operatorname{|}}
\newcommand{\indep}{\perp\hspace{-0.21cm}\perp}
\newcommand{\tr}{\operatorname{tr}}
\newtheorem{thm}{Theorem}
\newtheorem{ex}{Example}
\author{Marco Scutari\\University of Padova}
\title{Structure Variability in Bayesian Networks}
\email{marco.scutari@stat.unipd.it}
\begin{document}

\section{Introduction}

In recent years Bayesian networks have been successfully applied in several
different disciplines, including medicine, biology and epidemiology (see
for example \citet{gene} and \citet{holmes}). This has been made possible by a
rapid evolution of structure learning algorithms, both constraint-based (from
PC \citep{spirtes} to Grow-Shrink \citep{mphd} to IAMB \citep{iamb} and its
variants \citep{fastiamb}) and score-based (from Greedy Equivalent Search
\citep{ges} to genetic algorithms \citep{larranaga}).

The main goal in the development of these algorithms was the reduction of
the number of either independence tests or score comparisons needed to learn
the structure of the Bayesian network. Their correctness was proved assuming
either very large sample sizes in relation to the number of variables (in the
case of Greed Equivalent Search) or the absence of both false positives and
false negatives (in the case of Grow-Shrink and IAMB). In most cases the
characteristics of the learned networks were studied using a small number of
reference data sets \citep{bnr} as benchmarks, and differences from the true
structure measured with descriptive measures such as Hamming distance
\citep{graphs}.

This approach to model evaluation is not possible for real world data sets, as
the true structure of their probability distribution is not known in advance.
An alternative is provided by the use of either parametric or nonparametric
bootstrap \citep{efron}. By applying the learning algorithm to a sufficiently
large number of bootstrap samples it is possible to obtain confidence
intervals and empirical probabilities for any feature of the network structure
\citep{friedman}, such as the presence or the composition of the Markov Blanket
of a particular node. The fundamental limit in the interpretation of the
results is that the ``reasonable'' level of confidence for thresholding depends
on the data.

In this paper we propose a modified bootstrap-based approach for the inference
on the structure of a Bayesian network. The undirected graph underlying the
network structure is modeled as a multivariate Bernoulli random variable in
which each component is associated with an arc. This assumption allows the
derivation of both exact and asymptotic measures of the variability of the
network structure or its parts.

\section{Bayesian networks and bootstrap}
\label{sec:bn}

Bayesian networks are graphical models where nodes represent random variables
(the two terms are used interchangeably in this article) and arcs represent
probabilistic dependencies between them \citep{korb}.

The graphical structure $\mathcal{G} = (\mathbf{V}, A)$ of a Bayesian network is a
\textit{directed acyclic graph} (DAG) which defines a factorization of the
joint probability distribution of $\mathbf{V} = \{X_1, X_2, \ldots, X_v\}$,
often called the \textit{global probability distribution}, into a set of
\textit{local probability distributions}, one for each variable.
The form of the factorization is given by the \textit{Markov property},
which states that every random variable $X_i$ directly depends only on
its parents $\Pi_{X_i}$:
\begin{align}
  &\Prob(X_1, \ldots, X_v) = \prod_{i=1}^v \Prob(X_i \given \Pi_{X_i})&
  & \text{(for discrete variables)}\\
  &f(X_1, \ldots, X_v) = \prod_{i=1}^v f(X_i \given \Pi_{X_i})&
  & \text{(for continuous variables).}
\end{align}

Therefore it's important to define confidence measures for specific features
in the network structure, such as the presence of specific configurations of
arcs. A related problem is the definition of a measure of variability for
the network structure as a whole, both as an indicator of goodness of fit for
a particular Bayesian network and as a criterion to evaluate the performance
of a learning algorithm.

A possible solution for both these problems has been developed by \citet{friedman} using
bootstrap simulation, and modified by \citet{imoto} to estimate the confidence
in the presence of an arc (called \textit{edge intensity}, and also known as
\textit{arc strength}) and its direction.
This approach can be summarized as follows:
\begin{enumerate}
  \item For $b = 1, 2, \ldots, m$
    \begin{enumerate}
      \item re-sample a new data set $\mathbf{D^*_b}$ from the original data
        $\mathbf{D}$ using either parametric or nonparametric bootstrap.
      \item learn a Bayesian network $\mathcal{G}_b$ from $\mathbf{D^*_b}$.
    \end{enumerate}
  \item Estimate the confidence in each feature $f$ of interest as
    \begin{equation}
       \hat\Prob(f) = \frac{1}{m} \sum_{b=1}^m f(\mathcal{G}_b).
    \end{equation}
\end{enumerate}

However, the empirical probabilities $\hat\Prob(f)$ are difficult to evaluate,
because the distribution of $\mathcal{G}$ is unknown and the confidence 
threshold value depends on the data.

\section{The multivariate Bernoulli distribution}
\label{sec:mvb}

Let $B_1, B_2, \ldots, B_k$, $k \in \mathbb{N}$ be Bernoulli random variables
with marginal probability of success $p_1, p_2, \ldots, p_k$, that is
$B_i \sim Ber(p_i)$, $i = 1, \ldots, k$. Then the distribution of the random
vector $\mathbf{B} = [B_1, B_2, \ldots, B_k]^T$ over the joint probability
space of $B_1, B_2, \ldots, B_k$ is a \textit{multivariate Bernoulli random
variable} \citep{krummenauer}, denoted as $Ber_k(\mathbf{p})$. Its
probability function is uniquely identified by the parameter collection
\begin{equation}
  \mathbf{p} = \left\{ p_I : I \subseteq \{1, \ldots, k \}, I \neq \varnothing \right\},
\end{equation}
which represents the \textit{dependence structure} among the marginal distributions
in terms of simultaneous successes for every non-empty subset $I$ of elements of the
random vector.

However, several useful results depend only on the first and second order moments
of $\mathbf{B}$
\begin{align}
  \E(B_i) &= p_i \\
  \label{eq:var} \VAR(B_i) &= \E(B_i^2) - \E(B_i)^2 = p_i - p_i^2 \\
  \label{eq:cov} \COV(B_i, B_j) &= \E(B_i B_j) - \E(B_i) \E(B_j)= p_{ij} - p_i p_j
\end{align}
and the reduced parameter collection
\begin{equation}
  \mathbf{\tilde{p}} = \left\{ p_{ij} : i,j = 1, \ldots, k \right\},
\end{equation}
which is in fact used as an approximation of $\mathbf{p}$ in the generation
random multivariate Bernoulli vectors in \citet{mvbsim}.

\subsection{Uncorrelation and independence}

Let's first consider a simple result that links covariance (and therefore correlation)
and independence of two univariate Bernoulli variables.
\begin{thm}
\label{thm:univindep}
  Let $B_i$ and $B_j$ be two Bernoulli random variables. Then $B_i$ and $B_j$
  are independent if and only if their covariance is zero:
  \begin{equation}
    B_i \indep B_j \Longleftrightarrow \COV(B_i, B_j) = 0
  \end{equation}
\end{thm}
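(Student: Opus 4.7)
The plan is to prove the two implications separately, with the forward direction being essentially definitional and the reverse direction relying crucially on the fact that a Bernoulli variable takes only two values.

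For the forward direction ($\Rightarrow$), I would simply invoke the standard fact that independence of two random variables with finite second moments implies $\E(B_iB_j) = \E(B_i)\E(B_j)$, and hence $\COV(B_i,B_j)=0$ by equation \eqref{eq:cov}. This requires no special use of the Bernoulli structure.

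For the reverse direction ($\Leftarrow$), which is the substantive content (since zero covariance does not generally imply independence), the key observation is that the joint law of $(B_i,B_j)$ is supported on the four points $\{0,1\}^2$ and is therefore completely determined by $p_i$, $p_j$, and $p_{ij} = \Prob(B_i=1, B_j=1)$. Assuming $\COV(B_i,B_j)=0$, equation \eqref{eq:cov} gives $p_{ij} = p_i p_j$. I would then verify factorization at all four atoms using the marginal constraints:
\begin{align*}
  \Prob(B_i=1,B_j=1) &= p_{ij} = p_i p_j,\\
  \Prob(B_i=1,B_j=0) &= p_i - p_{ij} = p_i(1-p_j),\\
  \Prob(B_i=0,B_j=1) &= p_j - p_{ij} = (1-p_i)p_j,\\
  \Prob(B_i=0,B_j=0) &= 1 - p_i - p_j + p_{ij} = (1-p_i)(1-p_j),
\end{align*}
each of which equals the product of the corresponding marginals. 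This establishes $B_i \indep B_j$.

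The main obstacle is conceptual rather than technical: one must recognize that while zero covariance is generally weaker than independence, the two-point support of a Bernoulli variable reduces the question of factorization from an infinite family of equations to just four, all of which are forced by the single constraint $p_{ij}=p_ip_j$ together with the univariate marginals. Once this is noticed, the verification is routine algebra.
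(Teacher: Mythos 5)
Your proof is correct and follows the same route as the paper's: the forward direction is read off from the definition of covariance, and the reverse direction reduces to $p_{ij} = p_i p_j$ via equation \eqref{eq:cov}. If anything, your treatment of the reverse direction is more complete than the paper's, which passes from $p_{ij} = p_i p_j$ to $B_i \indep B_j$ in a single unjustified step, whereas you supply the four-atom verification $\Prob(B_i = a, B_j = b) = \Prob(B_i = a)\Prob(B_j = b)$ for all $(a,b) \in \{0,1\}^2$ that is precisely the point where the two-valued support of the Bernoulli distribution is actually used.
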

\begin{proof}
  If $B_i$ and $B_j$ are independent then by definition
  \begin{equation*}
    \COV(B_i, B_j) = p_{ij} - p_i p_j
      = \Prob(B_i = 1,B_j = 1) - \Prob(B_i = 1)\Prob(B_j = 1) = 0,
  \end{equation*}
  as $\Prob(B_i = 1, B_j = 1) = \Prob(B_i = 1)\Prob(B_j = 1)$.

  If on the other hand we have that $\COV(B_i, B_j) = 0$, then
  \begin{equation*}
    p_{ij} - p_i p_j = 0 \Rightarrow p_{ij} = p_i p_j \Rightarrow B_i \indep B_j
  \end{equation*}
  which completes the proof.
\end{proof}

This theorem can be extended to multivariate Bernoulli random variables as follows.

\begin{thm}
  Let $\mathbf{B} = [B_1, B_2, \ldots, B_k]^T$ and $\mathbf{C} = [C_1, C_2, \ldots, C_l]^T$,
  $k,l \in \mathbb{N}$ be two multivariate Bernoulli random variables. Then $\mathbf{B}$ and
  $\mathbf{C}$ are independent if and only if
  \begin{equation}
    \mathbf{B} \indep \mathbf{C} \Longleftrightarrow \COV(\mathbf{B}, \mathbf{C}) = \mathbf{O}
  \end{equation}
  where $\mathbf{O}$ is the zero matrix.
\end{thm}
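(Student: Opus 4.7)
The plan is to split the equivalence into its two directions and lean on the univariate result of Theorem \ref{thm:univindep} as much as possible. The forward implication is the easy one: if the random vectors $\mathbf{B}$ and $\mathbf{C}$ are independent, then every pair $(B_i, C_j)$ built from one component of each is also independent, so Theorem \ref{thm:univindep} yields $\COV(B_i, C_j) = 0$ for all $i \leq k$ and $j \leq l$, which is precisely the statement that the $k \times l$ cross-covariance matrix $\COV(\mathbf{B}, \mathbf{C})$ vanishes.

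For the converse, I would first apply Theorem \ref{thm:univindep} entry by entry: vanishing of $\COV(\mathbf{B}, \mathbf{C})$ immediately gives $B_i \indep C_j$ for every $i,j$. The remaining task is to lift these pairwise independences up to the joint statement $\mathbf{B} \indep \mathbf{C}$. Since a multivariate Bernoulli distribution is uniquely identified by the parameter collection $\mathbf{p} = \{p_I\}$ introduced at the start of Section \ref{sec:mvb}, it is enough to verify that the joint success probability factorises, $\Prob(B_i = 1 \text{ for } i \in I,\ C_j = 1 \text{ for } j \in J) = \Prob(B_i = 1 \text{ for } i \in I)\Prob(C_j = 1 \text{ for } j \in J)$, for every nonempty $I \subseteq \{1,\ldots,k\}$ and $J \subseteq \{1,\ldots,l\}$.

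The natural way to attempt this is to exploit the fact that the indicator products $\prod_{i \in I} B_i$ and $\prod_{j \in J} C_j$ are themselves $\{0,1\}$-valued, hence univariate Bernoulli, and therefore amenable to Theorem \ref{thm:univindep}: if their covariance can be shown to vanish, independence follows and delivers the required factorisation. I would try to establish this by induction on $|I| + |J|$, using the pairwise independences supplied by the hypothesis as the base case.

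The main obstacle is precisely this inductive step: $\COV(\mathbf{B}, \mathbf{C}) = \mathbf{O}$ is only a second-order condition, while joint independence of two Bernoulli vectors is an assertion about \emph{every} cross-moment. Bridging that gap cleanly will either require pulling in the multilinear structure of expectations of indicator products, or switching to the joint probability generating function of $(\mathbf{B}, \mathbf{C})$ and checking directly that it factors as the product of the two marginal generating functions; I expect this bookkeeping, rather than the invocation of Theorem \ref{thm:univindep} itself, to be where the real work of the proof lies.
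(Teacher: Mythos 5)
Your forward direction is exactly the paper's: vector independence gives pairwise independence of each $(B_i, C_j)$, Theorem \ref{thm:univindep} turns that into $c_{ij} = 0$, and the cross-covariance matrix vanishes. The problem is the converse, and you have put your finger on precisely the right spot --- but the gap you flag is not bookkeeping that can be pushed through; it is unbridgeable, because the converse as stated is false. Take $B_1, B_2$ independent $Ber(\frac{1}{2})$ and $C_1 = B_1 \oplus B_2$ (addition modulo $2$). Each of $B_1, B_2$ is independent of $C_1$, so the $2 \times 1$ matrix $\COV(\mathbf{B}, \mathbf{C})$ is $\mathbf{O}$, yet $C_1$ is a deterministic function of $\mathbf{B} = [B_1, B_2]^T$, so $\mathbf{B}$ and $\mathbf{C} = [C_1]$ are certainly not independent: $\Prob(B_1 = 1, B_2 = 1, C_1 = 1) = 0 \neq \frac{1}{8}$. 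Your proposed induction on $|I| + |J|$ therefore cannot close at the first nontrivial step: the hypothesis is purely second-order and puts no constraint on the third-order quantity $\E(B_1 B_2 C_1)$, which is exactly where the counterexample lives.

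For comparison, the paper's own proof of the converse also stops at pairwise independence and then asserts that the vectors are independent because their sigma-algebras ``are functions of the sigma algebras induced by the two sets of independent random variables.'' That step is not valid --- pairwise independence of generators does not imply independence of the generated sigma-algebras --- so the published proof contains the same gap you identified, only hidden rather than acknowledged. The analogy with the multivariate Gaussian case, invoked just after the theorem, is precisely what breaks down here: a Gaussian law is determined by its first two moments, whereas a multivariate Bernoulli law needs the full parameter collection $\mathbf{p} = \{p_I\}$, so a second-order hypothesis cannot control it. Your instinct that the lift from pairwise to joint independence is ``where the real work lies'' was correct; the honest conclusion is that this work cannot be done from $\COV(\mathbf{B}, \mathbf{C}) = \mathbf{O}$ alone, and the converse requires a strictly stronger hypothesis (for instance, factorisation of all mixed success probabilities $p_{I \cup J}$, not just the pairwise ones).
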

\begin{proof}
  If $\mathbf{B}$ is independent from $\mathbf{C}$, then by definition every pair
  $(B_i, C_j)$, $i = 1, \ldots, k$, $j = 1, \ldots, l$ is independent. Therefore the
  covariance matrix of $\mathbf{B}$ and $\mathbf{C}$ is
  \begin{equation*}
    \COV(B_i, C_j) = c_{ij} = 0
    \Longrightarrow
    \COV(\mathbf{B}, \mathbf{C}) = [c_{ij}] = \mathbf{O}
  \end{equation*}
  If conversely the covariance matrix $\COV(\mathbf{B}, \mathbf{C})$ is equal to the zero
  matrix, every pair $(B_i, C_j)$ is independent as
  \begin{equation*}
    c_{ij} = p_{ij} - p_i p_j = 0 \Longrightarrow p_{ij} = p_i p_j
  \end{equation*}
  This implies the independence of the random vectors $\mathbf{B}$ and $\mathbf{C}$, as
  their sigma-algebras
  \begin{align*}
    &\sigma(\mathbf{B}) = \sigma(B_1) \times \ldots \times \sigma(B_k)&
    &\text{and}&
    &\sigma(\mathbf{C}) = \sigma(C_1) \times \ldots \times \sigma(C_l)
  \end{align*}
  are functions of the sigma algebras induced by the two sets of independent random
  variables $B_1, B_2, \ldots, B_k$ and $C_1, C_2, \ldots, C_l$.
\end{proof}

The correspondence between uncorrelation and independence is identical to the
analogous property of the multivariate Gaussian distribution \citep{ash}, and
is closely related to the strong normality defined for orthogonal second order
random variables in \citet{loeve}. It can also be applied to disjoint subsets
of components of a single multivariate Bernoulli variable, as they are also
distributed as multivariate Bernoulli random variables.

\begin{thm}
  Let $\mathbf{B} = [B_1, B_2, \ldots, B_k]^T$ be a multivariate Bernoulli random
  variable; then every random vector $\mathbf{B^*} = [B_{i_1}, B_{i_2}, \ldots,
  B_{i_l}]^T$, $\left\{i_1, i_2, \ldots, i_l\right\} \subseteq \left\{1, 2, \ldots,
  k\right\}$ is a multivariate Bernoulli random variable.
\end{thm}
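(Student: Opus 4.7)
The plan is to verify that $\mathbf{B^*}$ satisfies the two defining properties of a multivariate Bernoulli random variable from the beginning of Section~\ref{sec:mvb}: (i) every component of $\mathbf{B^*}$ is a univariate Bernoulli variable, and (ii) the joint distribution of $\mathbf{B^*}$ is uniquely identified by a collection of ``simultaneous-success'' probabilities indexed by the non-empty subsets of $\{i_1, i_2, \ldots, i_l\}$.

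Property (i) is immediate, because each $B_{i_j}$ is itself a component of $\mathbf{B}$ and hence Bernoulli with parameter $p_{i_j}$. For property (ii), I would compute the joint mass function of $\mathbf{B^*}$ by marginalizing the joint distribution of $\mathbf{B}$ over the $k - l$ components whose indices lie outside $\{i_1, \ldots, i_l\}$. The key observation is that for every non-empty $J \subseteq \{i_1, \ldots, i_l\}$, the event $\bigcap_{j \in J}\{B_j = 1\}$ leaves all the remaining components of $\mathbf{B}$ unconstrained, so its probability is exactly $p_J$, an element of the original parameter collection $\mathbf{p}$. Hence the natural candidate parameter collection for $\mathbf{B^*}$ is the sub-collection $\mathbf{p^*} = \{p_J : J \subseteq \{i_1,\ldots,i_l\}, J \neq \varnothing\} \subseteq \mathbf{p}$.

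The remaining step is to argue that $\mathbf{p^*}$ actually determines the full joint law of $\mathbf{B^*}$, not merely the probabilities of the all-ones events. This follows from inclusion--exclusion on the Boolean lattice of subsets of $\{i_1,\ldots,i_l\}$: for any $\mathbf{b^*} \in \{0,1\}^l$, letting $S(\mathbf{b^*}) = \{i_j : b^*_j = 1\}$, the probability $\Prob(\mathbf{B^*} = \mathbf{b^*})$ admits the signed-sum representation $\sum_{J \supseteq S(\mathbf{b^*})} (-1)^{|J| - |S(\mathbf{b^*})|} \, p_J$, where $J$ ranges over subsets of $\{i_1,\ldots,i_l\}$. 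The main obstacle is the bookkeeping of this M\"obius-style inversion, but conceptually the claim is simply that marginalization preserves the simultaneous-success parameterization, so any sub-vector of $\mathbf{B}$ is again a multivariate Bernoulli with parameter collection $\mathbf{p^*}$, matching the defining form from Section~\ref{sec:mvb}.
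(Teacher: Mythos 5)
Your proof is correct and follows essentially the same route as the paper's: you check that each component $B_{i_j}$ is Bernoulli and that the sub-collection $\mathbf{p^*} = \{p_{I^*} : I^* \subseteq \{i_1,\ldots,i_l\},\ I^* \neq \varnothing\}$ of the original parameters identifies the law of $\mathbf{B^*}$. The only difference is that you explicitly supply the inclusion--exclusion (M\"obius inversion) argument showing that $\mathbf{p^*}$ determines every point mass $\Prob(\mathbf{B^*} = \mathbf{b^*})$, a step the paper simply asserts.
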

\begin{proof}
  The marginal components of $\mathbf{B^*}$ are Bernoulli random variables, because
  $\mathbf{B}$ is multivariate Bernoulli. The new dependency structure is defined as
  \begin{equation*}
    \mathbf{p^*} = \left\{ p_{I^*} : I^* \subseteq \{i_1, \ldots, i_l \} \subseteq \{1, \ldots, k \}, I^* \neq \varnothing \right\},
  \end{equation*}
  and uniquely identifies the probability distribution of $\mathbf{B^*}$.
\end{proof}

\begin{ex}
  Let's consider the trivariate Bernoulli random variable
  \begin{align*}
  &\mathbf{B} = \begin{bmatrix} B_1 \\ B_2 \\ B_3 \end{bmatrix} = \mathbf{B_1} + \mathbf{B_2}&
  &\text{where}& &\mathbf{B_1} = \begin{bmatrix} 0 \\ B_2 \\ 0 \end{bmatrix}&
  &\text{and}& &\mathbf{B_2} = \begin{bmatrix} B_1 \\ 0 \\ B_3 \end{bmatrix}.
  \end{align*}
  Then the covariance matrix
  \begin{align*}
    \COV(\mathbf{B_1}, \mathbf{B_2}) &=
      \E\left(\begin{bmatrix} 0 \\ B_2 \\ 0 \end{bmatrix}
              \begin{bmatrix} B_1 & 0 & B_3 \end{bmatrix} \right) -
      \E\left(\begin{bmatrix} 0 \\ B_2 \\ 0 \end{bmatrix}\right)
      \E\left(\begin{bmatrix} B_1 & 0 & B_3 \end{bmatrix} \right) \\
      &= \E\left(\begin{bmatrix}
                   0       & 0 & 0       \\
                   B_1 B_2 & 0 & B_2 B_3 \\
                   0       & 0 & 0       \\
                 \end{bmatrix} \right) -
      \begin{bmatrix} 0 \\ p_2 \\ 0 \end{bmatrix}
      \begin{bmatrix} p_1 & 0 & p_3 \end{bmatrix} \\
      &= \begin{bmatrix}
           0       & 0 & 0      \\
           p_{12}  & 0 & p_{23} \\
           0       & 0 & 0      \\
         \end{bmatrix} -
         \begin{bmatrix}
           0        & 0 & 0       \\
           p_1 p_2  & 0 & p_2 p_3 \\
           0        & 0 & 0       \\
         \end{bmatrix} = \\
      &= \begin{bmatrix}
           0       & 0 & 0      \\
           p_{12} - p_1 p_2 & 0 & p_{23} - p_2 p_3\\
           0       & 0 & 0      \\
         \end{bmatrix}
  \end{align*}
  of the two components $\mathbf{B_1}$ and $\mathbf{B_2}$ is equal to the zero
  matrix if and only if
  \begin{align*}
    \left\{
    \begin{aligned}
     p_{12} &= p_1 p_2 \\
     p_{23} &= p_2 p_3
    \end{aligned}
    \right. \Longrightarrow \left\{ B_1 \indep B_2, B_2 \indep B_3\right\}
  \end{align*}
  which in turn implies and is implied by $\mathbf{B_1} \indep \mathbf{B_2}$.
\end{ex}

\subsection{Properties of the covariance matrix}

The covariance matrix $\Sigma = [\sigma_{ij}]$, $i,j = 1, \ldots, k$ associated
with a multivariate Bernoulli random vector has several interesting numerical
properties. Due to the form of the central second order moments defined in
formulas \ref{eq:var} and \ref{eq:cov}, the diagonal elements are bound in the
interval
\begin{equation}
  \label{thm:varconstr}
  \sigma_{ii} = p_i - p_i^2 \in \left[0, \frac{1}{4}\right].
\end{equation}
The maximum is attained for $p_i = \frac{1}{2}$, and the minimum for both
$p_i = 0$ and $p_i = 1$. For the Cauchy-Schwartz theorem \citep{ash} then
\begin{equation}
  0 \leqslant \sigma_{ij}^2 \leqslant \sigma_{ii} \sigma_{jj} \leqslant \frac{1}{16}
  \Longrightarrow
  |\sigma_{ij}| \in \left[0, \frac{1}{4}\right].
\end{equation}

The eigenvalues $\lambda_1, \lambda_2, \ldots, \lambda_k$ of $\Sigma$ are
similarly bounded, as shown in the following theorem.

\begin{thm}
  Let $\mathbf{B} = [B_1, B_2, \ldots, B_k]^T$ be a multivariate Bernoulli random
  variable, and let $\Sigma = [\sigma_{ij}]$, $i,j = 1, \ldots, k$ be its covariance
  matrix. Let $\lambda_i$, $i = 1, \ldots, k$ be the eigenvalues of $\Sigma$. Then
  \begin{equation}
    0 \leqslant \sum_{i=1}^k \lambda_i \leqslant \frac{k}{4}
  \end{equation}
  and
  \begin{equation}
    0 \leqslant \lambda_i \leqslant \frac{k}{4}.
  \end{equation}
\end{thm}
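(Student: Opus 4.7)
The plan is to exploit two standard facts about a covariance matrix $\Sigma$: it is symmetric positive semidefinite, and its trace equals the sum of its eigenvalues. Together with the diagonal bound $\sigma_{ii} \in [0, 1/4]$ already established in \eqref{thm:varconstr}, these give both inequalities almost immediately.

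First I would argue that every $\lambda_i \geqslant 0$. Since $\Sigma$ is the covariance matrix of the random vector $\mathbf{B}$, for any $\mathbf{x} \in \mathbb{R}^k$ we have $\mathbf{x}^T \Sigma \mathbf{x} = \VAR(\mathbf{x}^T \mathbf{B}) \geqslant 0$, so $\Sigma$ is symmetric positive semidefinite and all its eigenvalues are nonnegative. This handles the lower bounds in both displays.

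Next I would compute the sum $\sum_{i=1}^k \lambda_i$ by using the identity $\sum_{i=1}^k \lambda_i = \tr(\Sigma) = \sum_{i=1}^k \sigma_{ii}$. Applying the diagonal bound $\sigma_{ii} = p_i - p_i^2 \leqslant 1/4$ termwise yields $\sum_{i=1}^k \lambda_i \leqslant k/4$, which is the first inequality.

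Finally, for the bound on an individual eigenvalue, I would note that since every $\lambda_j$ is nonnegative, for any fixed $i$
\begin{equation*}
  \lambda_i \leqslant \sum_{j=1}^k \lambda_j \leqslant \frac{k}{4}.
\end{equation*}
There is no real obstacle here; the only thing to be careful about is explicitly invoking positive semidefiniteness of $\Sigma$ before dropping the other eigenvalues in the last step, since that step relies crucially on $\lambda_j \geqslant 0$ for all $j \neq i$.
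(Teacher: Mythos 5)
Your proposal is correct and follows essentially the same route as the paper: nonnegativity of the eigenvalues from positive semidefiniteness, the trace identity combined with the bound $\sigma_{ii} \leqslant 1/4$ for the sum, and then dropping the remaining nonnegative eigenvalues to bound each $\lambda_i$ individually. The only difference is cosmetic — you justify positive semidefiniteness directly via the quadratic form $\mathbf{x}^T \Sigma \mathbf{x} \geqslant 0$ where the paper cites a reference — so nothing further is needed.
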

\begin{proof}
  Since $\Sigma$ is a real, symmetric, non-negative definite matrix, the eigenvalues
  $\lambda_i$ are non-negative real numbers \citep{salce}; this proves the lower bound
  in both inequalities.

  The upper bound in the first inequality holds because
  \begin{equation*}
     \sum_{i=1}^k \lambda_i = \sum_{i=1}^k \sigma_{ii} \leqslant
     \max_{\left\{\sigma_{ii}\right\}} \sum_{i=1}^k \sigma_{ii} =
     \sum_{i=1}^k \max \sigma_{ii} = \frac{k}{4},
  \end{equation*}
  as the sum of the eigenvalues is equal to the trace of $\Sigma$ \citep{seber}. This
  in turn implies
  \begin{equation*}
    \lambda_i \leqslant \sum_{i=1}^k \lambda_i \leqslant \frac{k}{4},
  \end{equation*}
  which completes the proof.
\end{proof}

These bounds define a convex set in $\mathbb{R}^k$, defined by the family
\begin{equation}
  \mathcal{D} = \left\{ \Delta^{k-1}(c) : c \in \left[ 0, \frac{k}{4} \right]\right\}
\end{equation}
where $\Delta^{k-1}(c)$ is the non-standard $k-1$ simplex
\begin{equation}
  \Delta^{k-1}(c) = \left\{ (\lambda_1, \ldots, \lambda_k) \in \mathbb{R}^k :
  \sum_{i=1}^k \lambda_i = c, \lambda_i \geqslant 0\right\}.
\end{equation}

\subsection{Sequences of multivariate Bernoulli variables}

Let's now consider a sequence of independent and identically distributed
multivariate Bernoulli variables $\mathbf{B_1}, \mathbf{B_2}, \ldots, \mathbf{B_m}
\sim Ber_k(\mathbf{p})$.
The sum
\begin{equation}
  \mathbf{S}_m = \sum_{i=1}^m \mathbf{B}_i \sim Bi_k(m, \mathbf{p})
\end{equation}
is distributed as a \textit{multivariate Binomial random variable} \citep{krummenauer},
thus preserving one of the fundamental properties of the univariate Bernoulli
distribution. A similar result holds for the \textit{law of small numbers},
whose multivariate version states that a $k$-variate Binomial distribution
$Bi_k(m, \mathbf{p})$ converges to a \textit{multivariate Poisson distribution}
$P_k(\mathbf{\Lambda})$:
\begin{align}
  &\mathbf{S}_m \stackrel{d}{\to} P_k(\mathbf{\Lambda})&
  &\text{as}&
  &m\mathbf{p} \to \mathbf{\Lambda}.
\end{align}

Both these distributions' probability functions, while tractable, are not very
useful as a basis for explicit inference procedures. An alternative is given by
the asymptotic \textit{multivariate Gaussian distribution} defined by the
\textit{multivariate central limit theorem} \citep{ash}:
\begin{equation}
\label{eqn:mclt}
  \frac{\mathbf{S}_m - m \E(\mathbf{B}_1)}{\sqrt{m}}
    \stackrel{d}{\to} N_k(\mathbf{0}, \Sigma).
\end{equation}
The limiting distribution is guaranteed to exist for all possible values of
$\mathbf{p}$, as the first two moments are bounded and therefore are always
finite.

\section{Inference on the network structure}

Let $\mathcal{U} = (\mathbf{V}, E)$ be the undirected graph underlying the DAG
$\mathcal{G} = (\mathbf{V}, A)$, defined as its unique biorientation
\citep{digraphs}. Each edge $e \in E$ of $\mathcal{U}$ corresponds to the directed
arcs in $A$ with the same incident nodes, and has only two possible states (it's
either present in or absent from the graph).

Then $e_i$, $i = 1, \ldots, |\mathbf{V}|(|\mathbf{V}| - 1)/2$ is naturally distributed
as a Bernoulli random variable
\begin{equation}
  E_i = \left\{
    \begin{aligned}
     e_i &\in E&     &\text{with probability $p_i$}& \\
     e_i &\not\in E& &\text{with probability $1 - p_i$}&
    \end{aligned}
    \right.
\end{equation}
and every set $W \subseteq \mathbf{V} \times \mathbf{V}$ (including $E$) is
distributed as a multivariate Bernoulli random variable $\mathbf{W}$ and
identified by the parameter collection
\begin{equation}
  \mathbf{p}_W = \left\{ p_w : w \subseteq W, w \neq \varnothing \right\}.
\end{equation}
The elements of $\mathbf{p}_W$ can be estimated via parametric or nonparametric
bootstrap as in \citet{friedman}, because they are functions of the DAGs
$\mathcal{G}_b$, $b = 1, \ldots, m$ through the underlying undirected
graphs $\mathcal{U}_b = (V, E_b)$. The resulting empirical probabilities
\begin{equation}
  \hat p_w = \frac{1}{m} \sum_{b=1}^m \mathbb{I}_{\left\{w \subseteq E_b\right\}}(\mathcal{U}_b),
\end{equation}
in particular
\begin{align}
  &\hat p_i = \frac{1}{m} \sum_{b=1}^m \mathbb{I}_{\left\{e_i \in E_b\right\}}(\mathcal{U}_b)&
  &\text{and}&
  &\hat p_{ij} = \frac{1}{m} \sum_{b=1}^m \mathbb{I}_{\left\{e_i \in E_b, e_j \in E_b\right\}}(\mathcal{U}_b),
\end{align}
can be used to obtain several descriptive measures and test statistics for
the variability of the network's structure.

\subsection{Interpretation of bootstrapped networks}

Considering the undirected graphs $\mathcal{U}_1, \ldots, \mathcal{U}_m$
instead of the corresponding directed graphs $\mathcal{G}_1, \ldots,
\mathcal{G}_m$ greatly simplifies the interpretation of bootstrap's results. In
particular the variability of the graphical structure can be summarized in three
cases according to the entropy \citep{itheory} of the set of the bootstrapped
networks:
\begin{itemize}
  \item \textit{minimum entropy}: all the networks learned from the bootstrap
    samples have the same structure, that is
    \begin{equation}
      E_1 = E_2 = \ldots = E_m = E.
    \end{equation}
    This is the best possible outcome of the simulation, because there is no
    variability in the estimated network. In this case the first two moments of
    the multivariate Bernoulli distribution are equal to
    \begin{align}
      &p_i = \left\{
        \begin{aligned}
         &1& &\text{if $e_i \in E$}     \\
         &0& &\text{otherwise}&
        \end{aligned}
        \right.&
      &\text{and}&
      &\Sigma = \mathbf{O}.
    \end{align}
  \item \textit{intermediate entropy}: several network structures are observed
    with different frequencies $m_b$, $\sum m_b = m$. The
    first two sample moments of the multivariate Bernoulli distribution are equal to
    \begin{align}
      &\hat p_i = \frac{1}{m} \sum_{b \,:\, e_i \in E_b} m_b&
      &\text{and}&
      &\hat p_{ij} = \frac{1}{m} \sum_{b \,:\, e_i \in E_b, e_j \in E_b} m_b.
    \end{align}
  \item \textit{maximum entropy}: all $2^{|\mathbf{V}|(|\mathbf{V}| - 1)/2}$ possible network structures appear with
    the same frequency, that is
    \begin{align}
      &\hat\Prob(\mathcal{U}_i) = \frac{1}{2^{|\mathbf{V}|(|\mathbf{V}|-1)/2}}& &i = 1, \ldots, 2^{|\mathbf{V}|(|\mathbf{V}| - 1)/2}.
    \end{align}
    This is the worst possible outcome because edges vary independently of each
    other and each one is present in only half of the networks (proof provided in
    appendix \ref{app:maxent}):
    \begin{align}
      \label{eqn:maxent}
      &p_i = \frac{1}{2}& &\text{and}& &\Sigma = \frac{1}{4} I_k.
    \end{align}
\end{itemize}

\subsection{Descriptive statistics of network's variability}
\label{sec:descriptive}

Several functions have been proposed in literature as univariate measures of
spread of a multivariate distribution, usually under the assumption of multivariate
normality (see for example \citet{muirhead} and \citet{bilodeau}). Three of
them in particular can be used as descriptive statistics for the multivariate
Bernoulli distribution:
\begin{itemize}
  \item the \textit{generalized variance}
    \begin{equation}
      \VAR_G(\Sigma) = \det(\Sigma).
    \end{equation}
  \item the \textit{total variance} (called \textit{total variation} in \citet{mardia})
    \begin{equation}
      \VAR_T(\Sigma) = \tr(\Sigma).
    \end{equation}
  \item the squared \textit{Frobenius matrix norm}
    \begin{equation}
      \VAR_N(\Sigma) = ||| \Sigma - \frac{k}{4}I_k|||_F^2.
    \end{equation}
\end{itemize}

Both the \textit{generalized variance} and the \textit{total variance} associate
high values of the statistic to unstable network structures, and are bounded
due to the properties of the covariance matrix $\Sigma$. For the total variance
it's easy to show that
\begin{equation}
\label{eq:vart}
  0 \leqslant \VAR_T(\Sigma) = \sum_{i=1}^k \sigma_{ii} \leqslant \frac{1}{4}k
\end{equation}
due to the bounds on the variances $\sigma_{ii}$ in equation \ref{thm:varconstr}.
The generalized variance is similarly bounded due to Hadamard's theorem on the
determinant of a non-negative definite matrix \citep{seber}:
\begin{equation}
  0 \leqslant \VAR_G(\Sigma) \leqslant \prod_{i=1}^k \sigma_{ii} \leqslant \left(\frac{1}{4}\right)^k.
\end{equation}
They reach the respective maxima in the \textit{maximum entropy} case and
are equal to zero only in the \textit{minimum entropy} case. The generalized
variance is also strictly convex (the maximum is reached only for $\Sigma =
\frac{1}{4}I_k$), but it is equal to zero if $\Sigma$ is rank deficient.
For this reason it's convenient to reduce $\Sigma$ to a smaller, full rank matrix
(let's say $\Sigma^*$) and compute $\VAR_G(\Sigma^*)$ instead of $\VAR_G(\Sigma)$.

The squared Frobenius norm on the other hand associates high values of the 
statistic to stable network structures. It can be rewritten in terms of the 
eigenvalues $\lambda_1, \ldots, \lambda_k$ of $\Sigma$ as
\begin{equation}
  \VAR_N(\Sigma) = \sum_{i=1}^k \left( \lambda_i - \frac{k}{4}\right)^2.
\end{equation}
It has a unique maximum (in the \textit{minimum entropy} case), which can be
computed as the solution of the constrained minimization problem in
$\boldsymbol{\lambda} = [\lambda_1, \ldots, \lambda_k]^T$
\begin{align}
  &\min_{\mathcal{D}} f(\boldsymbol{\lambda})
    = -\sum_{i=1}^k \left( \lambda_i - \frac{k}{4}\right)^2&
  &\text{subject to}&
  &\lambda_i \geqslant 0, \sum_{i=1}^k \lambda_i \leqslant \frac{k}{4}
\end{align}
using the extended Lagrange multipliers methods \citep{nocedal}. It also has a
single minimum in $\boldsymbol{\lambda}^* = [\frac{1}{4}, \ldots, \frac{1}{4}]$,
which is the projection of $[\frac{k}{4}, \ldots, \frac{k}{4}]$ onto the set
$\mathcal{D}$ and coincides with the \textit{maximum entropy} case. The proof
for these boundaries and the rationale behind the use of $\frac{k}{4}I_k$
instead of $\frac{1}{4}I_k$ are reported in appendix \ref{app:frob}.

The corresponding normalized statistics are:
\begin{align*}
  \overline{\VAR}_T(\Sigma) &= \frac{\VAR_T(\Sigma)}{\max_{\Sigma} \VAR_T(\Sigma)} = \frac{4 \VAR_T(\Sigma)}{k} \\
  \overline{\VAR}_G(\Sigma) &= \frac{\VAR_G(\Sigma)}{\max_{\Sigma} \VAR_G(\Sigma)} = 4^k \VAR_G(\Sigma) \\
  \overline{\VAR}_N(\Sigma) &= \frac{\max_{\Sigma} \VAR_N(\Sigma) - \VAR_N(\Sigma)}{\max_{\Sigma} \VAR_N(\Sigma) - \min_{\Sigma} \VAR_N(\Sigma)}
    = \frac{k^3 - 16\VAR_N(\Sigma)}{k(2k - 1)}.
\end{align*}
All of them vary in the $[0,1]$ interval and associate high values of the
statistic to networks whose structure display a high variability across the
bootstrap samples. Equivalently we can define
\begin{align*}
  \overline{\overline{\VAR}}_T(\Sigma) &= 1 - \overline{\VAR}_T(\Sigma) \\
  \overline{\overline{\VAR}}_G(\Sigma) &= 1 - \overline{\VAR}_G(\Sigma) \\
  \overline{\overline{\VAR}}_N(\Sigma) &= 1 - \overline{\VAR}_N(\Sigma)
\end{align*}
which associate high values of the statistic to networks with little
variability, and can be used as measures of distance from the \textit{maximum
entropy} case.

\begin{figure}[ht]
  \begin{center}
    \includegraphics[width=10cm]{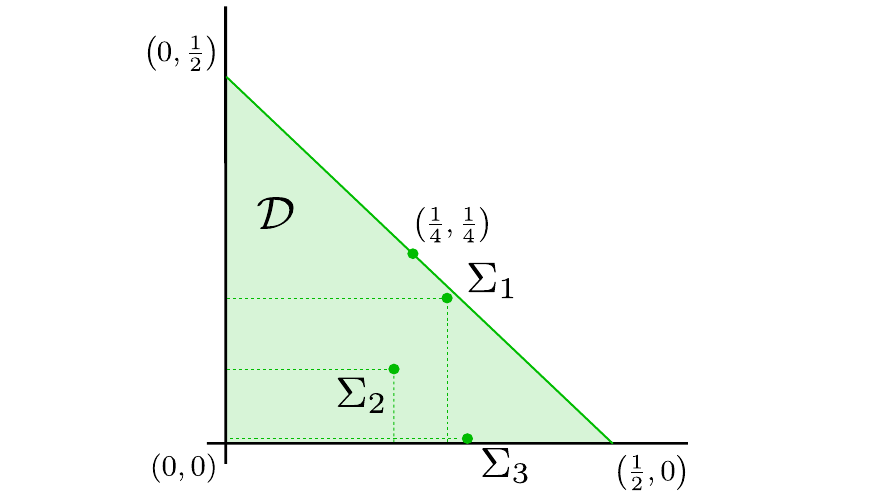}
  \end{center}
  \caption{The covariance matrices $\Sigma_1$, $\Sigma_2$ and $\Sigma_3$ represented
    as functions of their eigenvalues in $\mathcal{D}$ (green). The points $(0,0)$ 
    and $(\frac{1}{4}, \frac{1}{4})$ correspond to the \textit{minimum entropy} and
    \textit{maximum entropy} cases.}
\label{fig:base}
\end{figure}

\begin{ex}
\label{ex:base}
  Let's consider three multivariate Bernoulli distributions $\mathbf{W}_1$,
  $\mathbf{W}_2$, $\mathbf{W}_3$ with second order moments
  \begin{align*}
    &\Sigma_1 = \frac{1}{25} \begin{bmatrix} 6 & 1 \\ 1 & 6 \end{bmatrix},&
    &\Sigma_2 = \frac{1}{625} \begin{bmatrix} 66 & -21 \\ -21 & 126 \end{bmatrix},&
    &\text{and}&
    &\Sigma_3 = \frac{1}{625} \begin{bmatrix} 66 & 91 \\ 91 & 126 \end{bmatrix}.
  \end{align*}
  The eigenvalues of $\Sigma_1$, $\Sigma_2$ and $\Sigma_3$ are
  \begin{align*}
    &\boldsymbol{\lambda}_1 = \begin{bmatrix} 0.28 \\ 0.20 \end{bmatrix},&
    &\boldsymbol{\lambda}_2 = \begin{bmatrix} 0.2121 \\ 0.095 \end{bmatrix},&
    &\boldsymbol{\lambda}_3 = \begin{bmatrix} 0.3069 \\ 0.0003 \end{bmatrix}
  \end{align*}
  and the values of the generalized variance, total variance and squared Frobenius 
  matrix norm (both normalized and in the original scale) for the three covariance
  matrices are reported below.
  \begin{center}
  \begin{tabular}{|l|lll|lll|}
  \hline
             &                  &                  &                  &                             &                             & \\[-12pt]
             & $\VAR_T(\Sigma)$ & $\VAR_G(\Sigma)$ & $\VAR_N(\Sigma)$ & $\overline{\VAR}_T(\Sigma)$ & $\overline{\VAR}_G(\Sigma)$ & $\overline{\VAR}_N(\Sigma)$ \\
  \hline
  $\Sigma_1$ & $0.48$           & $0.056$          & $0.1384$         & $ 0.96$                     & $0.896$                     & $0.9642$  \\
  $\Sigma_2$ & $0.3072$         & $0.02016$        & $0.2468$         & $0.6144$                    & $0.32256$                   & $0.6752$     \\
  $\Sigma_3$ & $0.3072$     & $8.96\times 10^{-5}$ & $0.2869$         & $0.6144$                    & $0.00143$                   & $0.5682$  \\
  \hline
  \end{tabular}
  \end{center}
\end{ex}

\subsection{Asymptotic inference}
\label{sec:aymptotic}

The limiting distribution of the descriptive statistics defined above can be
derived by replacing the covariance matrix $\Sigma$ with its unbiased estimator
$\hat\Sigma$ and by considering the multivariate Gaussian distribution from equation
\ref{eqn:mclt}. The hypothesis we are interested in is
\begin{align}
\label{eqn:null}
  &H_0: \Sigma = \frac{1}{4}I_k&
  &H_1: \Sigma \neq \frac{1}{4}I_k,
\end{align}
which relates the sample covariance matrix with the one from the \textit{maximum
entropy} case.

For the total variance we have that \citep{muirhead}
\begin{equation}
  t_T = 4m \tr(\hat\Sigma) \stackrel{.}{\sim} \chi^2_{mk},
\end{equation}
and since the maximum value of $\tr(\Sigma)$ is achieved in the \textit{maximum
entropy} case, the hypothesis in \ref{eqn:null} assumes the form
\begin{align}
  &H_0: \tr(\Sigma) = \frac{k}{4}&
  &H_1: \tr(\Sigma) < \frac{k}{4}.
\end{align}
Then the observed significance value is
\begin{equation}
   \hat\alpha_T = \Prob(t_T \leqslant t_T^{oss}),
\end{equation}
and can be improved with the finite sample correction
\begin{equation}
   \tilde\alpha_T = \Prob\left(t_T \leqslant t_T^{oss} \,|\, t_T \in [0, mk]\right) = 
   \frac{\Prob(t_T \leqslant t_T^{oss})}{\Prob(t_T \leqslant mk)}
\end{equation}
which accounts for the bounds on $\VAR_T(\Sigma)$ from inequality \ref{eq:vart}.

For the generalized variance there are several possible asymptotic and approximate
distributions:
\begin{itemize}
  \item the Gaussian distribution defined in \citet{anderson}
    \begin{equation}
      t_{G_1} = \sqrt{m} \left( \frac{\det(\hat\Sigma)}{\det(\frac{1}{4}I_k)} -1 \right) \stackrel{.}{\sim} N(0, 2k).
    \end{equation}
  \item the Gamma distribution defined in \citet{steyn}
    \begin{equation}
      t_{G_2} = \frac{mk}{2} \sqrt[k]{\frac{\det(\hat\Sigma)}{\det(\frac{1}{4}I_k)}} \stackrel{.}{\sim} Ga\left(\frac{k(m+1-k)}{2}, 1\right).
    \end{equation}
  \item the saddlepoint approximation defined in \citet{saddlepoint}.
\end{itemize}

\begin{figure}[t]
  \begin{center}
    \includegraphics[width=15cm]{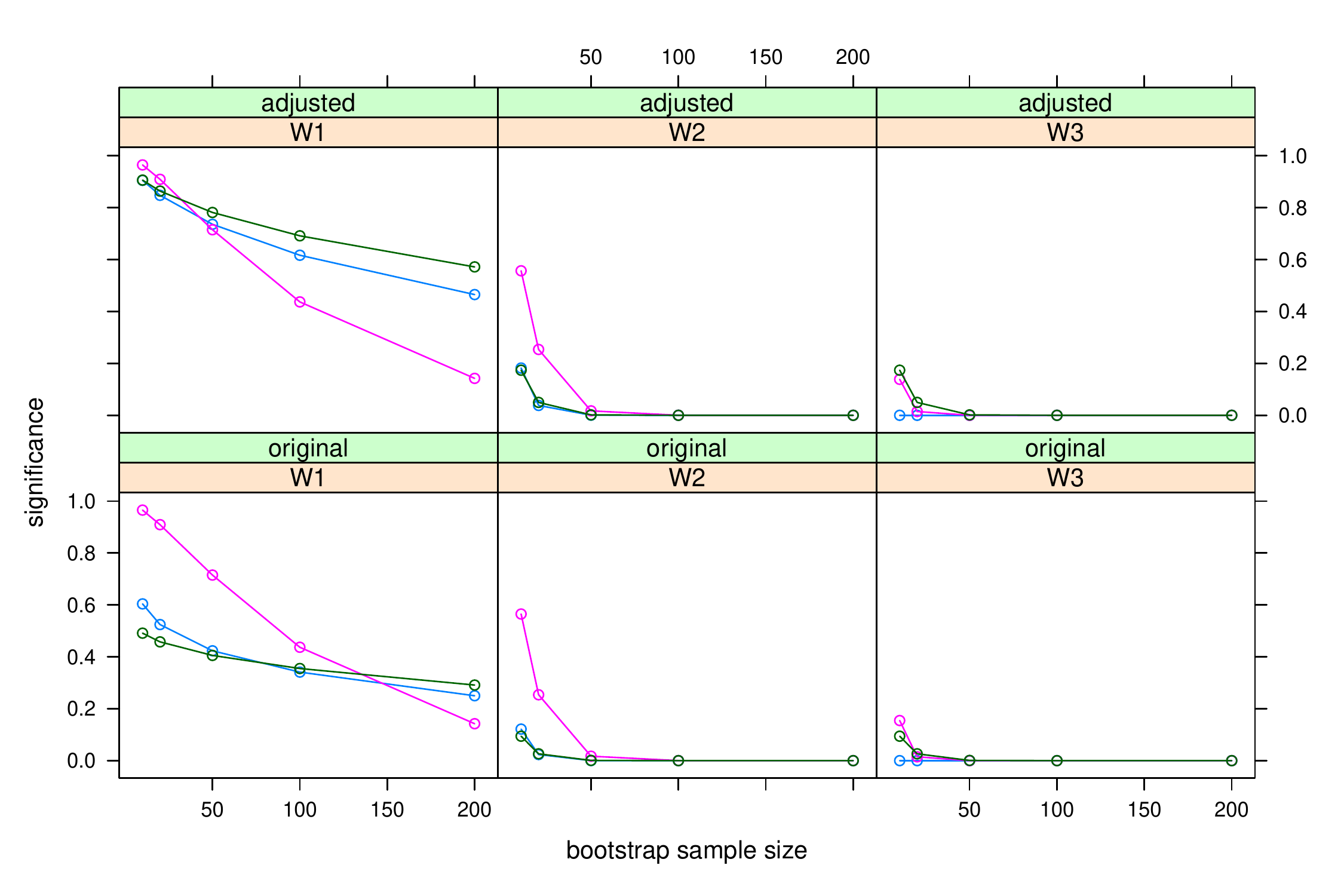}
  \end{center}
  \caption{Asymptotic significance values of $t_T$ (green), $t_{G_2}$ (blue) and 
    $t_N$ (violet) for $\Sigma_1$, $\Sigma_2$ and $\Sigma_3$ from
    table \ref{tab:asym}.}
\label{fig:asym}
\end{figure}

As before the hypothesis in \ref{eqn:null} assumes the form
\begin{align}
  &H_0: \det(\Sigma) = \det\left(\frac{1}{4}I_k\right)&
  &H_1: \det(\Sigma) < \det\left(\frac{1}{4}I_k\right).
\end{align}
The observed significance values for the Gaussian and Gamma distributions are
\begin{align}
   &\hat\alpha_{G_1} = \Prob(t_{G_1} \leqslant t_{G_1}^{oss})&
   &\hat\alpha_{G_2} = \Prob(t_{G_2} \leqslant t_{G_2}^{oss})
\end{align}
and the respective finite sample corrections for the bounds on $\det(\Sigma)$ are 
\begin{align}
   \tilde\alpha_{G_1} &= \Prob\left(t_{G_1} \leqslant t_{G_1}^{oss} \,|\, t_{G_1} \in \left[-\sqrt{m}, 0\right]\right)  
     = \frac{\Prob(t_{G_1} \leqslant t_{G_1}^{oss}) - \Prob(t_{G_1} \leqslant -\sqrt{m})}{\Prob(t_{G_1} \leqslant 0) -  \Prob(t_{G_1} \leqslant -\sqrt{m})}\\
   \tilde\alpha_{G_2} &= \Prob\left(t_{G_2} \leqslant t_{G_2}^{oss} \,|\, t_{G_2} \in \left[0, \frac{mk}{2}\right]\right) 
     =  \frac{\Prob(t_{G_2} \leqslant t_{G_2}^{oss})}{\Prob(t_{G_2} \leqslant \frac{mk}{2})}.
\end{align}

\begin{table}[t]
  \begin{center}
    \begin{tabular}{|l|lllll|}
    \hline
    \multicolumn{6}{|c|}{$t_T(\Sigma)$} \\
    \hline
               & $10$        & $20$        & $50$        & $100$          & $200$      \\
    \hline
    \multirow{2}{*}{$\Sigma_1$}
               & $0.4911379$ & $0.4576109$ & $0.4054044$ & $0.3549436$    & $0.2912432$ \\
               & $\mathbf{0.906041}$ & $\mathbf{0.863836}$ & $\mathbf{0.7814146}$ & $\mathbf{0.691495}$ & $\mathbf{0.571734}$  \\
    \multirow{2}{*}{$\Sigma_2$}
               & $0.0941934$ & $0.0263308$ & $0.0008529$ & $0.0000038$    & $1.09 \times 10^{-10}$ \\
               & $\mathbf{0.1737661}$ & $\mathbf{0.04970497}$ & $\mathbf{0.001644116}$ & $\mathbf{0.0000075}$ & $\mathbf{2.14 \times 10^{-10}}$ \\
    \multirow{2}{*}{$\Sigma_3$} 
               & $0.0941934$ & $0.0263308$ & $0.0008529$ & $0.0000038$    & $1.09 \times 10^{-10}$ \\
               & $\mathbf{0.1737661}$ & $\mathbf{0.04970497}$ & $\mathbf{0.001644116}$ & $\mathbf{0.0000075}$ & $\mathbf{2.14 \times 10^{-10}}$ \\
    \hline
    \multicolumn{6}{|c|}{$t_{G_2}(\Sigma)$} \\
    \hline
    \multirow{2}{*}{$\Sigma_1$} 
               & $0.6039442$ & $0.5242587$ & $0.4231830$     & $0.3411315$     & $0.250054$ \\
               & $\mathbf{0.9052188}$ & $\mathbf{0.8475223}$ & $\mathbf{0.7357998}$ & $\mathbf{0.6166961}$ & $\mathbf{0.4651292}$ \\
    \multirow{2}{*}{$\Sigma_2$}
               & $0.1214881$ & $0.0235145$ & $0.0002789$    & $0.0000002$     & $2.79 \times 10^{-13}$ \\
               & $\mathbf{0.1820918}$ & $\mathbf{0.03801388}$ & $\mathbf{0.000484961}$ & $\mathbf{0.00000045}$ & $\mathbf{5 \times 10^{-13}}$ \\
    \multirow{2}{*}{$\Sigma_3$}
               & $3.13 \times 10^{-10}$ & $2.03 \times 10^{-20}$ & $9.82 \times 10^{-51}$ & $4.42 \times 10^{-101}$ & $1.26 \times 10^{-201}$ \\
               & $\mathbf{4.7 \times 10^{-10}}$ & $\mathbf{3.28 \times 10^{-20}} $ & $ \mathbf{1.7 \times 10^{-50}} $ & $\mathbf{7.99 \times 10^{-101}} $ & $\mathbf{2.35 \times 10^{-201}} $\\
    \hline
    \multicolumn{6}{|c|}{$t_N(\Sigma)$} \\
    \hline
    \multirow{2}{*}{$\Sigma_1$} 
               & $0.9652055$ & $0.9091238$ & $0.7149371$     & $0.4368392$    & $0.1422717$ \\
               & $\mathbf{0.9645473}$ & $\mathbf{0.9091083}$ & $\mathbf{0.7149371}$ &  $\mathbf{0.4368392}$ & $\mathbf{0.1422717}$ \\
    \multirow{2}{*}{$\Sigma_2$} 
               & $0.5649382$ & $0.2537627$ & $0.0170906$     & $0.0001428$    & $7.48 \times 10^{-9}$ \\
               & $\mathbf{0.556708}$ & $\mathbf{0.2536360}$ & $\mathbf{0.01709067}$ & $\mathbf{0.0001428399}$ & $\mathbf{7.48 \times 10^{-9}}$ \\
    \multirow{2}{*}{$\Sigma_3$}
               & $0.1545514$ & $0.0147960$ & $0.0000085$     & $2.37 \times 10^{-11}$ & $1.34 \times 10^{-22}$ \\
               & $\mathbf{0.1385578}$ & $\mathbf{0.01462880}$ & $\mathbf{8.5 \times 10^{-06}}$ & $\mathbf{2.37 \times 10^{-11}}$ & $\mathbf{1.34 \times 10^{-22}}$ \\ 
    \hline
    \end{tabular}
  \caption{Asymptotic significance values of $t_T$, $t_{G_2}$ and $t_N$
    for $\Sigma_1$, $\Sigma_2$ and $\Sigma_3$; the ones computed with the
    finite sample corrections are reported in bold.}
  \label{tab:asym}
  \end{center}
\end{table}

The test statistic associated with the squared Frobenius norm is the test for 
the equality of two covariance matrices defined in \citet{nagao},
\begin{equation}
  t_N = \frac{m}{2} \tr\left(\left[\hat\Sigma \left(\frac{1}{4}I_k\right)^{-1} - I_k\right]^2\right)
    = \frac{m}{2} \tr\left(\left[4\hat\Sigma  - I_k\right]^2\right) \stackrel{.}{\sim} \chi^2_{\frac{1}{2}k(k+1)},
\end{equation}
because
\begin{multline}
  \tr\left(\left[4\hat\Sigma - I_k\right]^2\right) =
    \tr\left(\left[4\hat\Sigma - I_k\right] \left[4\hat\Sigma - I_k\right]\right)
    = 16 \tr\left(\left[\hat\Sigma - \frac{1}{4}I_k\right] \left[\hat\Sigma  - \frac{1}{4} I_k\right]\right) = \\
    = 16 \tr\left(\left[U \Lambda U^H - \frac{1}{4}I_k\right] \left[U \Lambda U^H  - \frac{1}{4} I_k\right]\right) = \\
    = 16 \tr\left(U\left[\Lambda - \frac{1}{4}I_k\right]U^H U \left[\Lambda - \frac{1}{4} I_k\right] U^H\right)
    = 16 \tr\left(\left[\Lambda - \frac{1}{4}I_k\right]^2\right) = \\
    = 16 \sum_{i=1}^k \left(\lambda_i - \frac{1}{4}\right)^2 = 16 ||| \hat\Sigma - \frac{1}{4}I_k|||_F^2
\end{multline}
where $U \Lambda U^H$ is the spectral decomposition of $\hat\Sigma$ (see
appendix \ref{app:frob} for and explanation of the use of $\frac{1}{4}I_k$
instead of $\frac{k}{4}I_k$). The significance value for $t_N$ is
\begin{equation}
  \hat\alpha_N = \Prob(t_N \geqslant t_N^{oss})
\end{equation}
as the hypothesis in \ref{eqn:null} becomes
\begin{align}
  &H_0: ||| \Sigma - \frac{1}{4}I_k|||_F = 0&
  &H_1: ||| \Sigma - \frac{1}{4}I_k|||_F > 0.
\end{align}
Unlike the previous statistics, Nagao's test displays a very good 
convergence speed, to the point that the finite sample correction 
for the bounds on the squared Frobenius matrix norm
\begin{equation}
   \tilde\alpha_N = \Prob\left(t_N \geqslant t_N^{oss} \,|\, t_{G_1} \in \left[0, t_N^{max}\right]\right) 
     = \frac{\Prob(t_N \geqslant t_N^{oss}) - \Prob(t_N > t_N^{max})}{\Prob(t_N \leqslant t_N^{max})}
\end{equation}
is not appreciably better than the raw significance value.

\begin{ex}
\label{ex:asymptotic}
  Let's consider again the multivariate Bernoulli distributions $\mathbf{W}_1$,
  $\mathbf{W}_2$, $\mathbf{W}_3$ and their covariance matrices $\Sigma_1$, 
  $\Sigma_2$, $\Sigma_3$ from example \ref{ex:base}. The respective asymptotic
  significance values for the statistics $t_T$, $t_{G_1}$ and $t_N$ are 
  reported in table \ref{tab:asym}.
\end{ex}

\subsection{Monte Carlo inference and parametric bootstrap}

Another approach to compute the significance values of the statistics
$\VAR_T(\Sigma)$, $\VAR_G(\Sigma)$ and $\VAR_N(\Sigma)$ is again parametric
bootstrap.

The multivariate Bernoulli distribution $\mathbf{W}_0$ specified by the hypothesis
in \ref{eqn:null} has a diagonal covariance matrix, so its components $W_{0_i}$,
$i = 1, \ldots, k$ are uncorrelated. According to theorem \ref{thm:univindep} they
are also independent, so the joint distribution of $\mathbf{W}_0$ is completely
specified by the marginal distributions $W_{0_i} \sim Ber(\frac{1}{2})$. Therefore
it's possible (and indeed quite easy) to generate observations from the null
distribution and use them to estimate the significance value of the normalized
statistics $\overline{\overline{\VAR}}_T(\Sigma)$, $\overline{\overline{\VAR}}_G(\Sigma)$
and $\overline{\overline{\VAR}}_N(\Sigma)$ defined in section \ref{sec:descriptive}:

\begin{enumerate}
  \item compute the value of test statistic $T$ on the original covariance matrix $\Sigma$.
  \item For $r = 1, 2, \ldots, R$.
  \begin{enumerate}
    \item generate $m$ sets of $k$ random samples from a $Ber(\frac{1}{2})$ distribution.
    \item compute their covariance matrix $\Sigma^*_r$.
    \item compute $T^*_r$ from $\Sigma^*_r$
  \end{enumerate}
  \item compute the Monte Carlo significance value as
  \begin{equation}
    \hat\alpha_R = \frac{1}{R} \sum_{r = 1}^R \mathbb{I}_{\{x \geqslant T\}}(T^*_r).
  \end{equation}
\end{enumerate}

This approach has two important advantages over the parametric tests defined in section 
\ref{sec:aymptotic}:
\begin{itemize}
  \item the test statistic is evaluated against the null distribution instead
    of its asymptotic approximation, thus removing any distortion caused by lack
    of convergence (which can be quite slow and problematic in high dimensions). 
  \item each simulation $r$ has a lower computational cost than the equivalent
    application of the structure learning algorithm to a bootstrap sample $b$.
    Therefore the Monte Carlo test can achieve a good precision with a smaller
    number of bootstrapped networks, allowing its application to larger problems.
\end{itemize}

\begin{figure}[t]
  \begin{center}
    \includegraphics[width=15cm]{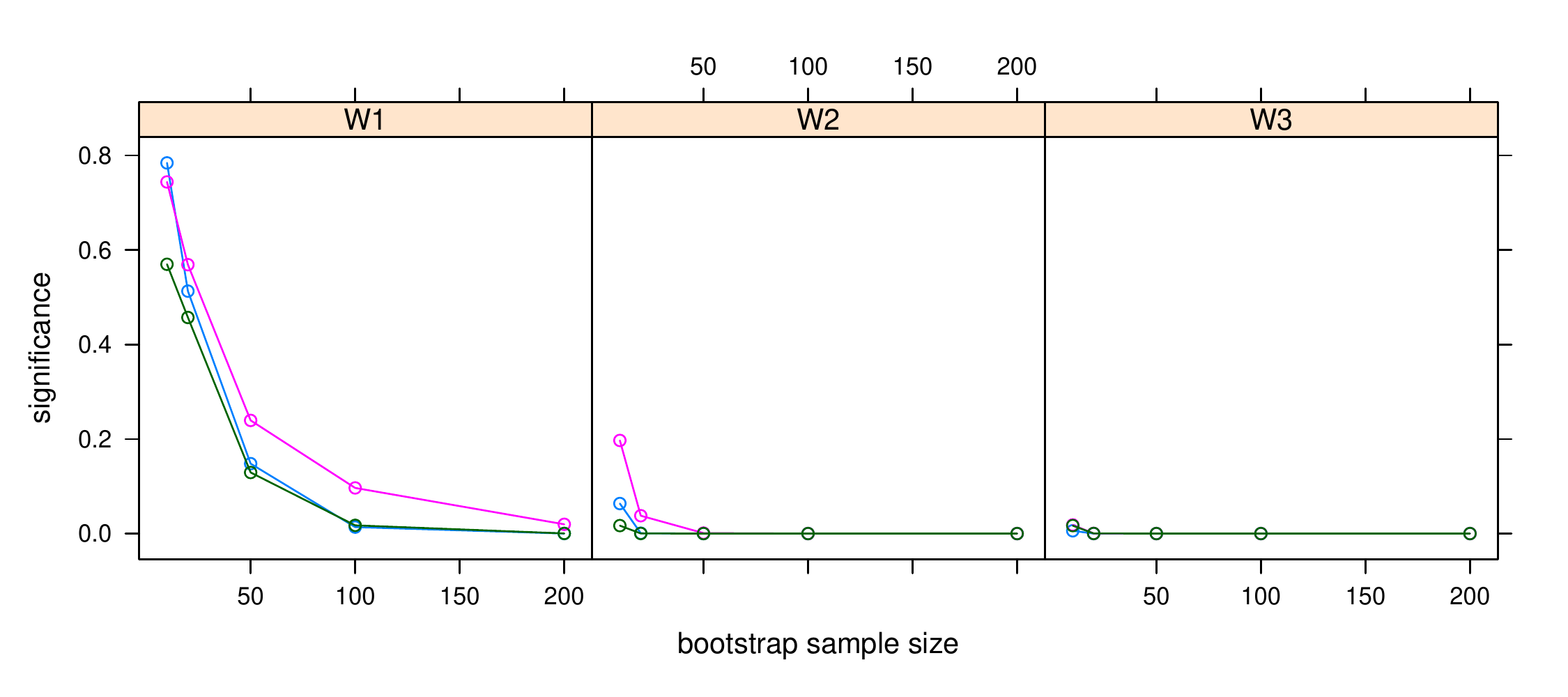}
  \end{center}
  \caption{Monte Carlo significance values for the total variance (green), the 
    generalized variance (blue) and the squared Frobenius matrix norm (violet) 
    from table \ref{tab:boot}.}
\label{fig:boot}
\end{figure}

\begin{table}[ht]
  \begin{center}
  \begin{tabular}{|l|lllll|}
  \hline
  \multicolumn{6}{|c|}{} \\[-12pt]
  \multicolumn{6}{|c|}{$\overline{\overline{\VAR}}_T(\Sigma)$} \\
  \hline
             & $10$       & $20$       & $50$       & $100$      & $200$      \\
  \hline
  $\Sigma_1$ & $0.569655$ & $0.457109$ & $0.129242$ & $0.017416$ & $0.000334$ \\
  $\Sigma_2$ & $0.016834$ & $0.000205$ & $0$        & $0$        & $0$        \\
  $\Sigma_3$ & $0.016834$ & $0.000205$ & $0$        & $0$        & $0$        \\
  \hline
  \multicolumn{6}{|c|}{} \\[-12pt]
  \multicolumn{6}{|c|}{$\overline{\overline{\VAR}}_G(\Sigma)$} \\
  \hline
  $\Sigma_1$ & $0.784102$ & $0.512839$ & $0.14788$  & $0.013678$ & $0.000094$ \\
  $\Sigma_2$ & $0.063548$ & $0.000761$ & $0$        & $0$        & $0$        \\
  $\Sigma_3$ & $0.005909$ & $0.000008$ & $0$        & $0$        & $0$        \\
  \hline
  \multicolumn{6}{|c|}{} \\[-12pt]
  \multicolumn{6}{|c|}{$\overline{\overline{\VAR}}_N(\Sigma)$} \\
  \hline
  $\Sigma_1$ & $0.743797$ & $0.568819$ & $0.239397$ & $0.096544$ & $0.019633$ \\
  $\Sigma_2$ & $0.196996$ & $0.037772$ & $0.001018$ & $0.000005$ & $0$        \\
  $\Sigma_3$ & $0.018292$ & $0.000355$ & $0$        & $0$        & $0$        \\
  \hline
  \end{tabular}
\end{center}
\caption{Bootstrap significance values from parametric bootstrap for $\Sigma_1$, $\Sigma_2$ 
  and $\Sigma_3$.}
\label{tab:boot}
\end{table}

\begin{ex}
\label{ex:bootpar}
  Let's consider the multivariate Bernoulli distributions $\mathbf{W}_1$,
  $\mathbf{W}_2$, $\mathbf{W}_3$ from examples \ref{ex:base} and \ref{ex:asymptotic}
  one last time. The corresponding significance values for the three normalized
  statistics $\overline{\overline{\VAR}}_T(\Sigma)$, $\overline{\overline{\VAR}}_G(\Sigma)$
  and $\overline{\overline{\VAR}}_N(\Sigma)$ are reported in table \ref{tab:boot} for various sizes
  of the bootstrap samples $(m = 10, 20, 50, 100, 200)$. Each one have been computed
  from $R = 10^6$ covariance matrices generated from the null distribution.
  The code used for the simulation is reported in appendix \ref{code:boot}.
\end{ex}

\section{Conclusions}

In this paper we derived the properties of several measures of variability for
the structure of a Bayesian network through its underlying undirected graph, 
which is assumed to have a multivariate Bernoulli distribution. Descriptive 
statistics, asymptotic and Monte Carlo tests were developed along with their 
fundamental properties. They can be used to compare the performance of different
learning algorithms and to measure the strength of any arbitrary subset of arcs.

\section*{Acknowledgements}

Many thanks to Adriana Brogini, my Supervisor at the Ph.D. School in Statistical
Sciences (University of Padova), for proofreading this article and giving many
useful comments and suggestions. I would also like to thank 
Giovanni Andreatta and Luigi Salce (Full Professors at the Department of Pure 
and Applied Mathematics, University of Padova) for their help in the development 
of the constrained optimization and matrix norm applications respectively.

\section*{Appendix}

\appendix

\section{Bounds on the squared Frobenius matrix norm}
\label{app:frob}

The squared Frobenius matrix norm of the difference between the covariance
matrix $\Sigma$ and the \textit{maximum entropy} matrix $\frac{1}{4}I_k$ is
\begin{equation}
   |||\Sigma - \frac{1}{4}I_k|||_F^2
    = \sum_{i=1}^k \left( \lambda_i - \frac{1}{4}\right)^2.
\end{equation}
Its unique global minimum is
\begin{equation}
   |||\Sigma - \frac{1}{4}I_k|||_F^2 = 0
\end{equation}
for $\Sigma = \frac{1}{4}I_k$ due to the fundamental properties of the matrix norms \citep{salce}.
However, it has a varying number of global maxima depending on the
dimension $k$ of $\Sigma$. They are the solutions of the constrained
minimization problem
\begin{align}
  &\min_{\mathcal{D}} f(\boldsymbol{\lambda})
    = -\sum_{i=1}^k \left( \lambda_i - \frac{k}{4}\right)^2&
  &\text{subject to}&
  &\lambda_i \geqslant 0, \sum_{i=1}^k \lambda_i \leqslant \frac{k}{4}
\end{align}
and can be computed from the Lagrangian equation and its derivatives
\begin{align}
  \mathcal{L}(\boldsymbol{\lambda}, \mathbf{s}) &= - \sum_{i=1}^k \left( \lambda_i
    - \frac{1}{4}\right)^2 - \sum_{i=1}^k s_i \lambda_i
    - s_{k+1} \left( \frac{k}{4} -\sum_{i=1}^k \lambda_i \right) \\
  \frac{\delta}{\delta \lambda_i}\mathcal{L}(\boldsymbol{\lambda}, \mathbf{s}) &=
    -2 \lambda_i + \frac{1}{2} - s_i + s_{k+1} \\
  \frac{\delta^2}{\delta^2 \lambda_i}\mathcal{L}(\boldsymbol{\lambda}, \mathbf{s}) &= -2, \qquad
  \frac{\delta^2}{\delta \lambda_i \delta \lambda_j}\mathcal{L}(\boldsymbol{\lambda}, \mathbf{s}) = 0
\end{align}
where $\mathbf{s} = [s_1. \ldots, s_{k+1}]^T$ are the Lagrangian multipliers.
This configuration of stationary points does not influence the results based on 
the asymptotic distribution of the multivariate Bernoulli distribution, but 
prevents any direct interpretation of quantities based on this difference in 
matrix norm as descriptive statistics.
\begin{figure}[t]
  \begin{center}
    \includegraphics{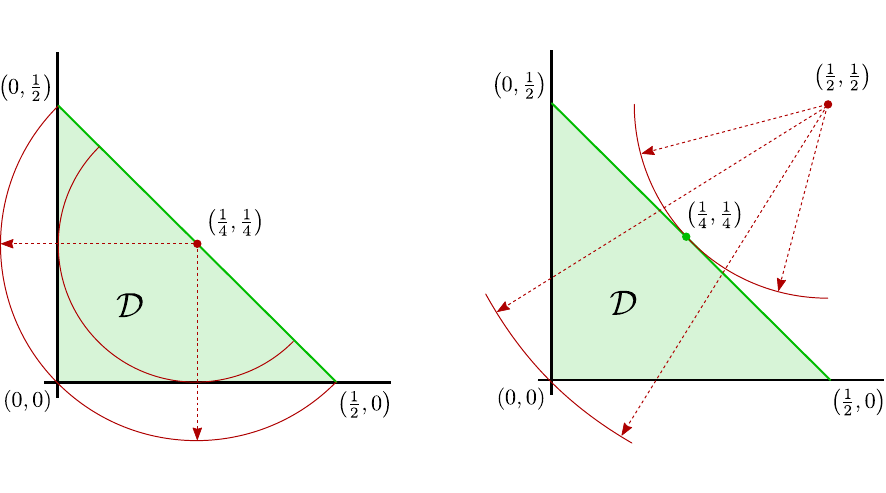}
  \end{center}
  \caption{Squared Frobenius matrix norms from $\frac{1}{4}I_K$ (on the left) and
    $\frac{k}{4}I_k$ (on the right) in $\mathcal{D}$ for $k = 2$. The green area 
    is the set $\mathcal{D}$ of the possible eigenvalues of $\Sigma$ and the red
    lines are level curves.}
\label{fig:frobenius}
\end{figure}

On the other hand the difference in squared Frobenius norm
\begin{equation}
  \VAR_N(\Sigma) = |||\Sigma - \frac{k}{4}I_k|||_F^2
    = \sum_{i=1}^k \left( \lambda_i - \frac{k}{4}\right)^2
\end{equation}
has both a unique global minimum (because it's a convex function)
\begin{equation}
  \min_{\mathcal{D}} \VAR_N(\Sigma) = \VAR_N\left(\frac{1}{4}I_k\right) 
    = \sum_{i=1}^k \left( \frac{1}{4} - \frac{k}{4}\right)^2 = \frac{k(k-1)^2}{16}
\end{equation}
and a unique global maximum
\begin{equation}
  \max_{\mathcal{D}} \VAR_N(\Sigma) = \VAR_N(\mathbf{O}) 
     = \sum_{i=1}^k \left( \frac{k}{4}\right)^2 = \frac{k^3}{16}
\end{equation}
which correspond to the \textit{minimum entropy} ($\boldsymbol{\lambda} = [0, \ldots, 0]$) 
and the \textit{maximum entropy} ($\boldsymbol{\lambda} = [\frac{1}{4}, \ldots, \frac{1}{4}]$)
covariance matrices respectively (see figure \ref{fig:frobenius}). However since 
$\frac{k}{4}I_k$ is not a valid covariance matrix for a multivariate Bernoulli 
distribution, $\VAR_N(\Sigma)$ cannot be used to derive any probabilistic result.

\section{Multivariate Bernoulli and the maximum entropy case}
\label{app:maxent}

Let's first state a simple theorem on the probability of one and two edges
in the \textit{maximum entropy} case.

\begin{thm}
  Let $\mathcal{U}_1, \ldots, \mathcal{U}_n$, $n = 2^m$, $m = |\mathbf{V}| (|\mathbf{V}| - 1)/2$ be
  all possible undirected graphs with vertex set $\mathbf{V}$ and let 
  \begin{align}
    &\Prob(\mathcal{U}_k) = \frac{1}{n}& &k = 1, \ldots, n.
  \end{align}
  Let $e_i$ and $e_j$, $i \neq j$ be two edges in $\mathbf{V} \times \mathbf{V}$. Then
  \begin{align}
    &\Prob(e_i) = \frac{1}{2}& &\text{and}& &\Prob(e_i, e_j) = \frac{1}{4}.
  \end{align}
\end{thm}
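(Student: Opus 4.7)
The plan is to reduce the statement to a straightforward counting argument on the uniform distribution over $n = 2^m$ equiprobable graphs, without needing any of the heavier machinery developed earlier in the paper. The key observation is that specifying an undirected graph with vertex set $\mathbf{V}$ is the same as specifying, for each of the $m = |\mathbf{V}|(|\mathbf{V}|-1)/2$ potential edges, whether it is present or absent; this gives a bijection between the sample space $\{\mathcal{U}_1,\ldots,\mathcal{U}_n\}$ and the product space $\{0,1\}^m$. Under this bijection, the event $\{e_i \text{ is present}\}$ corresponds to fixing one coordinate to $1$ and leaving the remaining $m-1$ coordinates free.

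First I would count the graphs containing $e_i$: holding the $e_i$-coordinate equal to $1$ and allowing each of the other $m-1$ coordinates to take either value independently produces exactly $2^{m-1}$ distinct graphs. Dividing by the total $n = 2^m$ immediately yields $\Prob(e_i) = 2^{m-1}/2^m = 1/2$. For the joint probability, since $i \neq j$ the two coordinates are distinct, so I would fix both $e_i$ and $e_j$ to $1$ and allow the remaining $m-2$ coordinates to range freely, giving $2^{m-2}$ graphs and hence $\Prob(e_i, e_j) = 2^{m-2}/2^m = 1/4$.

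There is no genuine obstacle here; the uniform distribution on $\{0,1\}^m$ makes the edge indicators independent Bernoulli$(1/2)$ random variables by construction, so the marginal and pairwise joint probabilities factor trivially. The only point worth stating carefully is the use of $i \neq j$, which guarantees that fixing two coordinates leaves $m-2$ (not $m-1$) free coordinates, so the counting gives $1/4$ rather than $1/2$.
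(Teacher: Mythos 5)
Your proposal is correct and follows essentially the same route as the paper's own proof: identify graphs with elements of $\{0,1\}^m$, count $2^{m-1}$ graphs containing $e_i$ and $2^{m-2}$ containing both $e_i$ and $e_j$, and divide by the total $2^m$ under the uniform distribution. No meaningful differences to report.
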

\begin{proof}
  The number of possible configurations of an undirected graph is given by
  the Cartesian product of the possible states of its $m$ edges, resulting in
  \begin{equation}
    |\{0,1\} \times \ldots \times \{0,1\}| = \left|\{0,1\}^m\right| = 2^m
  \end{equation}  
  possible undirected graphs. Then edge $e_i$ is present in
  \begin{equation}
    |\{0,1\} \times \ldots \times 1 \times \ldots \times \{0,1\}| = \left|1 \times \{0,1\}^{m-1}\right| = 2^{m-1}
  \end{equation}
  graphs and $e_i$ and $e_j$  are simultaneously present in
  \begin{equation}
    |\{0,1\} \times \ldots \times 1 \times 1 \times \ldots \times \{0,1\}| = \left|1^2 \times \{0,1\}^{m-2}\right| = 2^{m-2}
  \end{equation}
  graphs. Therefore 
  \begin{align}
    &\Prob(e_i) = \frac{2^{m-1} \Prob(\mathcal{U}_k) }{2^m \Prob(\mathcal{U}_k) } = \frac{1}{2}&
    &\text{and}&
    &\Prob(e_i, e_j) = \frac{2^{m-2} \Prob(\mathcal{U}_k) }{2^m \Prob(\mathcal{U}_k) } = \frac{1}{4}.
  \end{align}
\end{proof}

Then the values of $p_i$ and $\Sigma = [\sigma_{ij}]$ in equation 
\ref{eqn:maxent} are indeed: 
\begin{align}
  \E(e_i) &= p_i = \frac{1}{2} \\
  \VAR(e_i) &= \sigma_{ii} = p_i - p_i^2 = \frac{1}{2} - \frac{1}{4} = \frac{1}{4}\\
  \COV(e_i, e_j) &= \sigma_{ij} = p_{ij} - p_i p_j = \frac{1}{4} - \frac{1}{2} \cdot \frac{1}{2} = 0.
\end{align}
The fact that $\sigma_{ij} = 0$ for every $i \neq j$ also proves 
that the edges are independent according to theorem \ref{thm:univindep}.

\section{R code for the parametric bootstrap simulation}
\label{code:boot}

The following R function has been used to compute the significance values in
example \ref{ex:bootpar}.

{\scriptsize
\begin{verbatim}
biv.ber.sim = function(sigma, B, R, test) {

  if (test == "vart")
    FUN = function(lambda) 1/2 - sum(lambda)
  else if (test == "varg")
    FUN = function(lambda) 1/16 - prod(lambda)
  else if (test == "varn")
    FUN = function(lambda) sum((lambda - 1/4)^2)

  sim = matrix(0L, nrow = B, ncol = 2)
  tstar = numeric(R)

  s0 = eigen(sigma)$values
  t0 = FUN(s0)

  for (i in 1:R) {

    for (j in 1:B)
      sim[j, ] = rbinom(2, 1, 1/2)

    p = prop.table(table(factor(sim[, 1], levels = c(0,1)),
          factor(sim[, 2], levels = c(0,1))))

    sigmastar = matrix(
      c(sum(p[2,]) * (1 - sum(p[2,])),
        p[2,2] - sum(p[,2])*sum(p[2,]),
        p[2,2] - sum(p[,2])*sum(p[2,]),
        sum(p[,2]) * (1 - sum(p[,2]))),
      nrow = 2, ncol = 2, byrow = TRUE)

    sstar = eigen(sigmastar)$values
    tstar[i] = FUN(sstar)

  }

  return(length(tstar[tstar >= t0])/R)

}
\end{verbatim}
}

The three covariance matrices $\Sigma_1$, $\Sigma_2$ and $\Sigma_3$ have been
created in R with the following commands.
{\scriptsize
\begin{verbatim}
sigma1 = matrix(c(6, 1, 1, 6)/25, ncol = 2)
sigma2 = matrix(c(66, -21, -21, 126)/625, ncol = 2)
sigma3 = matrix(c(66, 91, 91, 126)/625, ncol = 2)
\end{verbatim}
}

All the simulations have been performed on a Core Duo 2 machine with 1GB of RAM,
with R 2.9.0 \citep{r} and an updated Debian GNU/Linux distribution.

\section{R code for the asymptotic inference}
{\scriptsize
\begin{verbatim}

total.variance = function(sigma, b, adjusted = FALSE) {

  res = pchisq(4 * b * sum(diag(sigma)), 2 * b, lower.tail = TRUE)

  if (adjusted)
    res = res / pchisq(2 * b, 2 * b, lower.tail = TRUE)

  return(res)

}

generalized.variance = function(sigma, b, adjusted = FALSE) {

  res = pgamma(4 * b * sqrt(det(sigma)), b - 1, 1, lower.tail = TRUE)

  if (adjusted)
    res = res / pgamma(b, b - 1, 1, lower.tail = TRUE)

  return(res)

}

frobenius.norm = function(sigma, b, adjusted = FALSE) {

  res = pchisq(8 * b * sum((eigen(sigma)$values - 1/4 )^2), 3, lower.tail = FALSE)

  if (adjusted)
    res = (res - pchisq(b, 3, lower.tail = FALSE)) / pchisq(b, 3, lower.tail = TRUE)

  return(res)

}
\end{verbatim}
}

\pagebreak

\end{document}